    \newtheorem{aaaaa}{DO NOT USE!}
    \newtheorem{proposition}[aaaaa]{Proposition}
    \newtheorem{corollary}[aaaaa]{Corollary}
    \newtheorem{remark}[aaaaa]{Remark}
    \newtheorem{definition}[aaaaa]{Definition}
    \newtheorem{example}[aaaaa]{Example}
    \newtheorem{theorem}[aaaaa]{Theorem}
    \newtheorem{algorithm}[aaaaa]{Algorithm}
\begin{document}

\title{List decoding of a class of affine variety codes}

\author{Olav Geil and Casper Thomsen\\
Department of Mathematical Sciences\\
Aalborg University, Denmark\\
\url{olav@math.aau.dk}, \url{caspert@math.aau.dk}
}

\maketitle  

\begin{abstract}
 Consider a polynomial $F$ in $m$ variables and a finite point
 ensemble $S=S_1\times \cdots \times S_m$. When given the leading monomial of $F$ with
 respect to a lexicographic ordering we derive improved information on
 the possible number of zeros of $F$ of multiplicity at least $r$ from
 $S$. We then use this information to design a list decoding algorithm
 for a large class of affine variety codes.\\

    \textbf{Keywords.} Affine variety code, list decoding, multiplicity,
    Schwartz-Zippel bound.
\end{abstract}

\section{Introduction}\label{secintro}
In this paper we study affine variety codes over ${\mathbf{F}}_q$ in
the special case where the variety is $S_1 \times \cdots \times
S_m$. Here, $S_i \subseteq {\mathbf{F}}_q$, $i=1, \ldots , m$. More
formally  write
$$S=S_1 \times \cdots \times S_m=\{P_1, \ldots ,P_{|S|}\}$$ 
and consider the evaluation map 
$${\mbox{ev}}_S:{\mathbf{F}}_q[X_1,
\ldots ,X_m] \rightarrow {\mathbf{F}}_q^{|S|}, {\mbox{ \ \ \ \ }}{\mbox{ev}}_S(F)=(F(P_1),
\ldots , F(P_{|S|})).$$ 
Let 
$${\mathbb{M}}\subseteq \{X_1^{i_1} \cdots X_m^{i_m} \mid 0 \leq i_j <
|S_j|, j=1, \ldots , m\}$$
and define the affine variety code
$$E({\mathbb{M}},S)={\mbox{Span}}_{{\mathbf{F}}_q}\{{\mbox{ev}}_S(M)
\mid M \in {\mathbb{M}} \}.$$
Clearly, the generalized Reed-Muller codes is a special example of
such a code. Other well-known examples are the hyperbolic codes \cite{hyperbolic},
Reed-Solomon product codes \cite{augot}, (generalized)
toric codes \cite{hansen,ruano,brasosul}, and Joyner codes
\cite{joyner}. In the present work we show that the dimension of
$E({\mathbb{M}},S)$ equals $|{\mathbb{M}}|$ and we present a lower
bound on the 
minimum distance which turns out to be sharp when $|{\mathbb{M}}|$
satisfies certain reasonable criteria. We then present a
 decoding algorithm for 
$E({\mathbb{M}},S)$. The algorithm is a straightforward
generalization of the Guruswami-Sudan decoding algorithm for
Reed-Solomon codes \cite{GS}, except that it involves a preparation step. A main
ingredient in the preparation step is information about how many
zeros of multiplicity at least $r$ a polynomial $F(X_1, \ldots
,X_m)$ can have given information about its leading monomial with respect
to a lexicographic ordering. \\
It is well-known that the generalized Reed-Muller codes can be viewed as subfield subcodes
of Reed-Solomon codes. As demonstrated by Pellikaan and Wu \cite{pw_ieee,PellikaanWu} this
observation leads
to an efficient decoding algorithm of generalized Reed-Muller codes
via the Guruswami-Sudan decoding algorithm for Reed-Solomon
codes. When the variety does not equal $S={\mathbf{F}}_q \times \cdots
\times {\mathbf{F}}_q $, or when ${\mathbb{M}}$ is not chosen as all the monomials of
degree up to some number, such a neat result does not hold
any more.\\
Pellikaan and Wu also presented a direct interpretation of the
Guruswami-Sudan decoding algorithm for generalized Reed-Muller
codes. Even though Augot et al. improved the analysis of the
latter algorithm dramatically \cite{augot,augotstepanov} the algorithm that uses the subfield
subcode is still superior when decoding generalized Reed-Muller
codes. The analysis by Augot et al.\ uses a 
generalization of the Schwartz-Zippel bound to also deal with
multiplicity. A proof of this bound was given  by Dvir et al.\
in~\cite{dvir}. Augot et al.\  then used their insight to
generalize the second algorithm by Pellikaan and Wu (the direct
interpretation of the Guruwami-Sudan algorithm) to Reed-Muller
like codes over $S=S_1 \times \cdots \times S_m$, when $S_1=\cdots
=S_m$, and to Reed-Solomon product codes as well. Based on the
generalized 
Schwartz-Zippel bound they estimated the decoding radius. \\
In the
present work we improve upon the methods from~\cite{dvir} to derive
more detailed information on how many zeros of prescribed multiplicity
 a polynomial can have. Rather than using only information on the
degree of the polynomial we use information on the leading monomial
with respect to a lexicographic ordering. We present both a recursive
algorithm to find such bounds and also closed formula
expressions for the case of polynomials in two variables. \\
The
interpolation polynomial $Q(X_1, \ldots , X_m,Z)$ in the list decoding
algorithm of the present paper can be viewed as a polynomial in $Z$ with coefficients from
${\mathbf{F}}_q[X_1, \ldots , X_m]$. Having fixed the code to be used,
in a preparation step we determine sets from which the coefficients
must be taken. Here, we use our improved insight on how many zeros
with prescribed multiplicity a polynomial can have given information
about 
its leading monomial. The idea of a preparation step comes
from~\cite{geilmatsumoto} where an interpretation of the
Guruswami-Sudan decoding algorithm without multiplicity was described
for order domain codes. Our experiments show that the method of the
present work is an improvement upon the situation where only the generalized Schwartz-Zippel bound
is used for the design and the analysis of the decoding algorithm. Our
algorithm works for codes of not too high dimensions. For small
dimensions the algorithm often  decodes more than $d/2$
errors. 
{\section{Parameters of the codes}\label{secto}}
Throughout the paper we shall use the notation
$s_i=|S_i|$ for $i=1, \ldots , m$. Also we shall write $n=|S|$ as
clearly the length of $E({\mathbb{M}},S)$ equals $|S|$. First we show
how to find the dimension of the code. 
\begin{proposition}
The dimension of $E({\mathbb{M}},S)$ equals $|{\mathbb{M}}|$.
\end{proposition}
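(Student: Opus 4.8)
The plan is to show that the spanning vectors $\{\mathrm{ev}_S(M) \mid M \in \mathbb{M}\}$ are in fact linearly independent, so that the dimension of their span equals $|\mathbb{M}|$. Since $\mathbb{M}$ is by definition a subset of the full box
$$\Delta = \{X_1^{i_1} \cdots X_m^{i_m} \mid 0 \le i_j < s_j, \ j = 1, \ldots , m\},$$
and since any subset of a linearly independent family is again linearly independent, it suffices to prove the claim for $\Delta$ itself. Note that $|\Delta| = s_1 \cdots s_m = |S| = n$, so independence of $\{\mathrm{ev}_S(M) \mid M \in \Delta\}$ is equivalent to asserting that $\mathrm{ev}_S$ restricts to a bijection from $\mathrm{Span}_{\mathbf{F}_q}\{M \mid M \in \Delta\}$ onto $\mathbf{F}_q^n$.

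To establish this bijection I would use a Vandermonde / tensor-product argument. The evaluation map on the product point set $S = S_1 \times \cdots \times S_m$ factors as the tensor product of the univariate evaluation maps $\mathrm{ev}_{S_j}$ acting on $\mathrm{Span}_{\mathbf{F}_q}\{1, X_j, \ldots , X_j^{s_j-1}\}$. Each univariate map is represented by an $s_j \times s_j$ Vandermonde matrix formed from the $s_j$ distinct elements of $S_j$; distinctness makes each such matrix invertible, so each $\mathrm{ev}_{S_j}$ is an isomorphism. A tensor product of isomorphisms is an isomorphism, which yields the required bijection, and hence the linear independence of $\{\mathrm{ev}_S(M) \mid M \in \Delta\}$.

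Equivalently, in the Gr\"obner-basis language used elsewhere in the paper, $\Delta$ is exactly the footprint of the vanishing ideal $I(S) = \langle \prod_{a \in S_j}(X_j - a) \mid j = 1, \ldots , m\rangle$: the given generators form a Gr\"obner basis whose leading monomials $X_j^{s_j}$ lie in separate variables, so the standard monomials are precisely the elements of $\Delta$. Standard monomials are always linearly independent modulo the ideal, and $\mathrm{ev}_S$ induces an isomorphism $\mathbf{F}_q[X_1, \ldots , X_m]/I(S) \cong \mathbf{F}_q^n$, so their images are independent in $\mathbf{F}_q^n$. The only genuine content lies in this independence of the full box $\Delta$; the passage from $\mathbb{M}$ to $\Delta$ is immediate. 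The one point demanding care is the invocation of distinctness of the points within each $S_j$, which is exactly what makes the Vandermonde matrices (equivalently, the quotient isomorphism) nonsingular.
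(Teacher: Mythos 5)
Your proof is correct, but it runs in the opposite direction from the paper's. You establish \emph{injectivity} of $\mathrm{ev}_S$ on the span of the full box $\Delta$ of monomials -- via the factorization of $\mathrm{ev}_S$ as a tensor (Kronecker) product of univariate evaluation maps, each an invertible Vandermonde matrix because the points within each $S_j$ are distinct -- and then conclude bijectivity from the count $|\Delta| = s_1\cdots s_m = n$. The paper instead establishes \emph{surjectivity}: given an arbitrary target vector $(a_1,\ldots,a_n)$ it writes down an explicit Lagrange interpolation polynomial $F$, then reduces $F$ modulo $\{A_1,\ldots,A_m\}$ with $A_i=\prod_{a\in S_i}(X_i-a)$, so that the remainder has $\deg_{X_i} < s_i$ for all $i$ while keeping the same evaluations; bijectivity again follows from the same dimension count. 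Both arguments are complete and elementary. Yours isolates more cleanly the single point where distinctness of the evaluation points enters (invertibility of the Vandermonde factors); the paper's is constructive, producing an explicit preimage, and its division-with-remainder step foreshadows the footprint/Gr\"{o}bner viewpoint used for Theorem~2. One caution about your Gr\"{o}bner-basis variant: asserting that $A_1,\ldots,A_m$ form a Gr\"{o}bner basis \emph{of the vanishing ideal} $I(S)$ presupposes that they actually generate $I(S)$, and the standard proof of that fact is essentially the dimension count at issue ($\dim \mathbf{F}_q[X_1,\ldots,X_m]/\langle A_1,\ldots,A_m\rangle = |\Delta| = n = \dim \mathbf{F}_q[X_1,\ldots,X_m]/I(S)$, the latter requiring the surjectivity of $\mathrm{ev}_S$ that the paper proves); stated as you state it, it skirts circularity, so the Vandermonde argument should be regarded as your actual proof and the Gr\"{o}bner remark as a reformulation rather than an independent second proof.
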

\begin{proof}
We only need to show that 
$$\{ {\mbox{ev}}_S(X_1^{i_1}, \ldots , X_m^{i_m}) \mid 
 0 \leq i_j <
s_j, j=1, \ldots , m\}$$
constitutes a basis for ${\mathbf{F}}_q^{n}$ as a vectorspace over
${\mathbf{F}}_q$. For this purpose it is sufficient to show that the 
restriction of ${\mbox{ev}}$ to 
\begin{eqnarray}
\{ G(X_1, \ldots , X_m) \mid \deg_{X_i}(G) < s_i, i=1, \ldots ,m\} \label{equi}
\end{eqnarray}
is surjective. Given $(a_1, \ldots ,a_{n})\in {\mathbf{F}}_q^{n}$ let
$$F(X_1, \ldots , X_m)=\sum_{v=1}^n a_v \prod_{i=1}^m \prod_{a \in {\mathbf{F}}_q
  \backslash \{P_i^{(v)}\}}\bigg( \frac{X_i-a}{P_i^{(v)}-a} \bigg).$$
Here, we have used the notation $P_v=(P_1^{(v)}, \ldots , P_n^{(v)})$,
$v=1, \ldots , n$. 
It is clear that ${\mbox{ev}}_S(F)=(a_1, \ldots , a_n)$ and therefore
${\mbox{ev}}_S : {\mathbf{F}}_q [X_1, \ldots , X_m] \rightarrow
{\mathbf{F}}_q^n$ is surjective. Consider an arbitrary monomial
ordering. Let $R(X_1, \ldots , X_m)$ be the remainder of $F(X_1,
\ldots ,X_m)$ after division with
$$\{ A_1(X_1, \ldots , X_m), \ldots , A_m(X_1, \ldots , X_m)\}.$$
Here, $A_i(X_1, \ldots ,X_m)=\prod_{a \in S_i}(X_i-a)$, $i=1, \ldots
,m$. Clearly, $F(P_i)=R(P_i)=a_i$, $i=1, \ldots , n$. Hence, the
restriction of ${\mbox{ev}}_S$ to~(\ref{equi}) is indeed surjective.
\end{proof}
We next show how to estimate the minimum distance of
$E({\mathbb{M}},S)$. The Schwartz-Zippel bound \cite{Schwartz,Zippel,DeMilloLipton} is as follows:
\begin{theorem}\label{thorgsz}
Given a lexicographic ordering let the leading monomial of $F(X_1,
\ldots , X_m)$ be $X_1^{i_1}\cdots X_m^{i_m}$.
The number of elements in
$S=S_1 \times \cdots \times S_m$ that are zeros of $F$ is at most
equal to 
$$i_1s_2\cdots s_m+s_1 i_2 s_3 \cdots s_m+\cdots
+s_1\cdots s_{m-1} i_m.$$
\end{theorem}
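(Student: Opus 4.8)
The plan is to prove this by induction on the number of variables $m$, exploiting the structure of the lexicographic ordering. The base case $m=1$ is immediate: if $F(X_1)$ has leading monomial $X_1^{i_1}$, then $\deg F = i_1$, so $F$ has at most $i_1$ roots in $S_1$, matching the claimed bound. For the inductive step, I would view $F(X_1,\ldots,X_m)$ as a polynomial in $X_1$ whose coefficients lie in $\mathbf{F}_q[X_2,\ldots,X_m]$, writing $F = \sum_{j} c_j(X_2,\ldots,X_m) X_1^j$ with $c_{i_1}\neq 0$ the coefficient of the highest power of $X_1$ appearing. Under a lexicographic ordering that treats $X_1$ as the most significant variable, the leading monomial $X_1^{i_1}\cdots X_m^{i_m}$ forces this top coefficient $c_{i_1}$ to have leading monomial $X_2^{i_2}\cdots X_m^{i_m}$.

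Next I would partition the points of $S$ according to their last $m-1$ coordinates. Fix a point $(b_2,\ldots,b_m)\in S_2\times\cdots\times S_m$ and consider the univariate polynomial $F(X_1,b_2,\ldots,b_m)$ in $X_1$. There are two cases. If the top coefficient $c_{i_1}(b_2,\ldots,b_m)\neq 0$, then $F(X_1,b_2,\ldots,b_m)$ is a nonzero polynomial of degree exactly $i_1$, hence has at most $i_1$ zeros among the $s_1$ choices of first coordinate. If instead $c_{i_1}(b_2,\ldots,b_m)=0$, I bound the number of zeros in that fibre trivially by $s_1$, the total number of first coordinates available. The number of fibres falling into this second case is controlled by the inductive hypothesis: $c_{i_1}(b_2,\ldots,b_m)=0$ only for those $(b_2,\ldots,b_m)$ that are zeros of $c_{i_1}$, and since $c_{i_1}$ has leading monomial $X_2^{i_2}\cdots X_m^{i_m}$, the induction bounds the number of such points in $S_2\times\cdots\times S_m$ by $i_2 s_3\cdots s_m + s_2 i_3 s_4\cdots s_m + \cdots + s_2\cdots s_{m-1}i_m$.

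Summing over all fibres then gives the result. Writing $N_1 = |\{(b_2,\ldots,b_m) : c_{i_1}(b_2,\ldots,b_m)=0\}|$, the total number of zeros of $F$ in $S$ is at most $i_1(s_2\cdots s_m - N_1) + s_1 N_1 = i_1 s_2\cdots s_m + (s_1 - i_1)N_1$. Applying the inductive bound on $N_1$ and distributing $(s_1-i_1)$—bounded above by $s_1$—across the terms yields exactly the claimed expression $i_1 s_2\cdots s_m + s_1 i_2 s_3\cdots s_m + \cdots + s_1\cdots s_{m-1}i_m$. The main obstacle to keep clean is the bookkeeping in this final arithmetic step: one must verify that replacing $(s_1-i_1)$ by $s_1$ in front of the inductive bound on $N_1$ produces precisely the stated sum and does not undercount, and that the coefficient extraction genuinely forces the leading monomial of $c_{i_1}$ to be $X_2^{i_2}\cdots X_m^{i_m}$, which relies crucially on $X_1$ being the dominant variable in the lexicographic order.
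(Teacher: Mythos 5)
Your proof is correct. Note that the paper does not actually prove this theorem: it cites Schwartz, Zippel and DeMillo--Lipton and remarks only that the proof is ``purely combinatorial,'' so there is no in-paper argument to compare against; yours is the standard combinatorial induction on the number of variables that underlies those references. The two points you flagged as needing care do go through. First, under a lexicographic order with $X_1$ dominant, the leading monomial of $F$ is $X_1^{i_1}$ times the lex leading monomial of the top coefficient $c_{i_1}$, so $\mathrm{LM}(c_{i_1})=X_2^{i_2}\cdots X_m^{i_m}$ exactly as you claim. Second, the closing arithmetic is sound: with $N_1\ge 0$ the number of exceptional fibres, the zero count is at most $i_1(s_2\cdots s_m-N_1)+s_1N_1=i_1s_2\cdots s_m+(s_1-i_1)N_1\le i_1s_2\cdots s_m+s_1N_1$, and applying the inductive bound to $N_1$ yields precisely the stated sum; this chain of inequalities holds even when $i_1\ge s_1$ (the term $(s_1-i_1)N_1$ is then nonpositive), so no separate case analysis on the size of $i_1$ is needed. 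The only addition worth making is the observation that the bound $\sum_j i_j\prod_{k\ne j}s_k$ is invariant under relabelling the variables, which is what licenses your assumption that $X_1$ is the most significant variable of the given lexicographic ordering.
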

The proof of this result is purely combinatorial. Using the
inclusion-exclusion principle it can actually be strengthened to the
following result which is a special case of the footprint bound from Gr\"{o}bner basis
theory: 
\begin{theorem}\label{footprspecial}
Given a lexicographic ordering let the leading monomial of $F(X_1,
\ldots , X_m)$ be $X_1^{i_1}\cdots X_m^{i_m}$. The number of elements in
$S=S_1 \times \cdots \times S_m$ that are zeros of $F$ is at most
equal to $$ n-(s_1-i_1)(s_2-i_2)\cdots (s_m-i_m). $$
\end{theorem}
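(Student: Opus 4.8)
The plan is to derive Theorem~\ref{footprspecial} from the footprint bound of Gr\"obner basis theory, and to observe along the way that the resulting estimate is exactly the inclusion--exclusion refinement of Theorem~\ref{thorgsz}. Keeping the notation $A_k=\prod_{a\in S_k}(X_k-a)$ from the previous proof, I would consider the ideal $J=\langle A_1,\ldots,A_m,F\rangle\subseteq\F_q[X_1,\ldots,X_m]$. Every point of the variety $V(J)$ is a common zero of $A_1,\ldots,A_m$, so its $k$-th coordinate is a root of $A_k$ and hence lies in $S_k$; thus $V(J)\subseteq S$, all points of $V(J)$ are $\F_q$-rational, and $V(J)$ is precisely the set of points of $S$ at which $F$ vanishes. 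The number to be bounded is therefore $|V(J)|$.

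First I would invoke the footprint bound for the (zero-dimensional) ideal $J$, namely $|V(J)|\le\dim_{\F_q}\F_q[X_1,\ldots,X_m]/J=|\Delta_{\plex}(J)|$, where $\Delta_{\plex}(J)$ denotes the footprint, i.e.\ the set of monomials lying outside the leading-term ideal $\langle\mathrm{lm}(J)\rangle$ generated by the leading monomials of all elements of $J$. The inequality holds because $\dim_{\F_q}\F_q[X_1,\ldots,X_m]/J\ge\dim_{\F_q}\F_q[X_1,\ldots,X_m]/I(V(J))=|V(J)|$, the rightmost equality being valid since $V(J)$ is a finite set of rational points; no claim of radicality of $J$ is needed.

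The crucial step, which lets us avoid computing a full Gr\"obner basis of $J$, is to replace the exact footprint by an explicit superset built only from leading monomials we already know. Since $\mathrm{lm}(A_k)=X_k^{s_k}$ and, by hypothesis, $\mathrm{lm}(F)=X_1^{i_1}\cdots X_m^{i_m}$, all of these belong to $\langle\mathrm{lm}(J)\rangle$, so $L:=\langle X_1^{s_1},\ldots,X_m^{s_m},\,X_1^{i_1}\cdots X_m^{i_m}\rangle\subseteq\langle\mathrm{lm}(J)\rangle$. Consequently $\Delta_{\plex}(J)$ is contained in the set of monomials not lying in $L$, and it remains only to count the latter. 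This containment, together with the footprint inequality, is where the genuine Gr\"obner-theoretic input enters, and I expect it to be the one step requiring care; the rest is bookkeeping.

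Finally I would carry out the count. The monomials outside $\langle X_1^{s_1},\ldots,X_m^{s_m}\rangle$ form the box $\{X_1^{j_1}\cdots X_m^{j_m}\mid 0\le j_k<s_k\}$, of cardinality $n=s_1\cdots s_m$; among these, the ones additionally divisible by $X_1^{i_1}\cdots X_m^{i_m}$ satisfy $i_k\le j_k<s_k$ and hence number $\prod_{k=1}^m(s_k-i_k)$ in the only nontrivial case $i_k<s_k$ for all $k$ (otherwise the asserted bound exceeds $n$ and is vacuous). This yields $|V(J)|\le n-\prod_{k=1}^m(s_k-i_k)$, as claimed. Expanding the product gives $n-\prod_{k=1}^m(s_k-i_k)=\sum_{\emptyset\ne T\subseteq\{1,\ldots,m\}}(-1)^{|T|+1}\big(\prod_{k\in T}i_k\big)\big(\prod_{k\notin T}s_k\big)$, whose $|T|=1$ terms reproduce the Schwartz--Zippel estimate of Theorem~\ref{thorgsz} while the remaining terms supply the inclusion--exclusion corrections, thereby confirming that the present bound strengthens Theorem~\ref{thorgsz}.
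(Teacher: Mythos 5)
Your main argument is correct and follows exactly the route the paper itself points to: the paper offers no proof of Theorem~\ref{footprspecial}, presenting it as a special case of the footprint bound from Gr\"{o}bner basis theory (alternatively obtainable by sharpening the combinatorial proof of Theorem~\ref{thorgsz} via inclusion--exclusion), and your derivation --- bounding the zero set of $F$ in $S$ by $|V(J)|$ for $J=\langle A_1,\ldots,A_m,F\rangle$, then by $\dim_{\F_q}\F_q[X_1,\ldots,X_m]/J=|\Delta_{\plex}(J)|$, then by the number of monomials outside $L=\langle X_1^{s_1},\ldots,X_m^{s_m},X_1^{i_1}\cdots X_m^{i_m}\rangle$ --- is precisely that special case made explicit. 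The individual steps are sound: rationality of $V(J)$ because each $A_k$ splits with all roots in $S_k$; the dimension inequality via $J\subseteq I(V(J))$, so that no radicality of $J$ is needed; the containment $L\subseteq\langle\mathrm{lm}(J)\rangle$ since $A_1,\ldots,A_m,F\in J$; and the count of box monomials not divisible by $X_1^{i_1}\cdots X_m^{i_m}$.

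The one defect is your parenthetical disposal of the case where some $i_k\ge s_k$: the claim that the asserted bound is then at least $n$ and hence vacuous is true only when some factor $s_k-i_k$ vanishes or an odd number of them are negative. If an even number (at least two) of the factors are strictly negative, their product is positive, the stated bound is strictly smaller than $n$, and the theorem as literally stated is then actually \emph{false}: for $m=2$, $S_1=S_2=\{0,1\}$ and $F=(X_1^3-X_1^2)(X_2^3-X_2^2)$, the lex leading monomial is $X_1^3X_2^3$, so the formula gives $4-(-1)(-1)=3$, yet $F$ vanishes at all $n=4$ points of $S$. So the statement requires the hypothesis $i_k\le s_k$ for all $k$ (equivalently, each factor must be read as $\max\{s_k-i_k,0\}$), and that hypothesis is exactly what your counting step uses when it asserts that the monomials of the box divisible by the leading monomial number $\prod_{k=1}^m(s_k-i_k)$. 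This costs nothing in the paper, where the theorem is only applied to polynomials with support in $\mathbb{M}$, whose leading monomials automatically satisfy $i_k<s_k$; but your write-up should impose the restriction explicitly rather than argue that the remaining cases are vacuous.
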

\begin{proposition}
The minimum distance of $E({\mathbb{M}},S)$ is at least 
$$\min \{ (s_1-i_1)(s_2-i_2)\cdots (s_m-i_m) | X_1^{i_1}\cdots
X_m^{i_m} \in {\mathbb{M}}\}.$$
If for every $M\in {\mathbb{M}}$ it holds that $N \in {\mathbb{M}}$ for
all $N$ such that $N | M$ then the bound is sharp.
\end{proposition}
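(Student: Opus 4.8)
The plan is to treat the two claims separately, with the lower bound following almost immediately from Theorem~\ref{footprspecial} and the sharpness requiring an explicit construction that exploits the divisibility hypothesis on $\mathbb{M}$.

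For the lower bound, I would first observe that by the preceding proposition the set of monomials $X_1^{i_1}\cdots X_m^{i_m}$ with $0\le i_j<s_j$ evaluates to a basis of $\mathbf{F}_q^n$, so in particular $\mathrm{ev}_S$ is injective on $\mathrm{Span}_{\mathbf{F}_q}\{M \mid M \in \mathbb{M}\}$. Hence every nonzero codeword is $\mathrm{ev}_S(G)$ for a genuinely nonzero polynomial $G=\sum_{M\in\mathbb{M}}c_M M$, and its weight equals $n$ minus the number of zeros of $G$ in $S$. The key point is that the leading monomial of $G$ with respect to the lexicographic ordering is the $\prec_\mathrm{lex}$-largest monomial occurring with nonzero coefficient; since the support of $G$ lies in $\mathbb{M}$, this leading monomial $X_1^{i_1}\cdots X_m^{i_m}$ belongs to $\mathbb{M}$. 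Applying Theorem~\ref{footprspecial} then bounds the number of zeros by $n-(s_1-i_1)\cdots(s_m-i_m)$, so the weight is at least $(s_1-i_1)\cdots(s_m-i_m)$, which is at least the stated minimum.

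For sharpness, let $X_1^{i_1}\cdots X_m^{i_m}\in\mathbb{M}$ attain the minimum. For each $j$ I would choose an arbitrary subset $\{a_{j,1},\ldots,a_{j,i_j}\}\subseteq S_j$ of size $i_j$ and form $G=\prod_{j=1}^m\prod_{k=1}^{i_j}(X_j-a_{j,k})$. Expanding shows that every monomial occurring in $G$ divides $X_1^{i_1}\cdots X_m^{i_m}$, so by the divisibility-closedness of $\mathbb{M}$ each such monomial lies in $\mathbb{M}$ and therefore $\mathrm{ev}_S(G)$ is a codeword. A point of $S$ is a nonzero of $G$ precisely when its $j$-th coordinate avoids $\{a_{j,1},\ldots,a_{j,i_j}\}$ for every $j$, and there are exactly $\prod_{j=1}^m(s_j-i_j)$ such points; hence the codeword has weight exactly $(s_1-i_1)\cdots(s_m-i_m)$, matching the lower bound.

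The only delicate point I anticipate is the interplay between the leading-monomial argument and the hypothesis on $\mathbb{M}$: for the lower bound one must be sure the leading monomial stays inside $\mathbb{M}$, which is immediate because $G$ is supported on $\mathbb{M}$, whereas for sharpness the closure of $\mathbb{M}$ under divisors is exactly what guarantees that the product of linear factors is an admissible codeword. Once these two membership facts are in place, no further estimation is required.
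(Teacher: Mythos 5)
Your proposal is correct and follows essentially the same route as the paper: the lower bound is obtained by applying Theorem~\ref{footprspecial} to the leading monomial of a codeword's representing polynomial (which lies in ${\mathbb{M}}$ since the support does), and sharpness is shown by the same explicit construction $\prod_{j}\prod_{k}(X_j-a_{j,k})$, whose support consists of divisors of $X_1^{i_1}\cdots X_m^{i_m}$ and whose weight is exactly $(s_1-i_1)\cdots(s_m-i_m)$. The extra remarks on injectivity of $\mathrm{ev}_S$ are harmless but not needed for the lower bound, since a nonzero codeword automatically comes from a nonzero polynomial.
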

\begin{proof}
The first part follows from Theorem~\ref{footprspecial}. To see the
last part write for $i=1, \ldots , m$, $S_i=\{b_1^{(i)}, \ldots , b_{|S_i|}^{(i)}\}$.
The polynomial 
$$F(X_1, \ldots  X_m)=\prod_{v=1}^m \prod_{j=1}^{i_v} \big(X_v-b_j^{(v)}\big)$$
has leading monomial $X_1^{i_1}\cdots X_m^{i_m}$ with respect to any
monomial ordering and evaluates to zero in exactly $n- (s_1-i_1)(s_2-i_2)\cdots (s_m-i_m)$ points from $S$. Finally, any monomial that occurs in the support
of $F$ is a factor of $X_1^{i_1} \cdots X_m^{i_m}$.
\end{proof}
\section{Bounding the number of zeros of multiplicity~$r$}
In the following let $\vec{X}=(X_1, \ldots , X_m)$ and $\vec{T}=(T_1,
\ldots , T_m)$.
\begin{definition}
Let ${\mathbf{F}}$ be any field. Given $F(\vec{X})\in {\mathbf{F}}[\vec{X}]$ and
$\vec{k} \in {\mathbf{N}}_0^m$ then the $\vec{k}$'th
Hasse derivative of $F$, denoted by $F^{(\vec{k})}(\vec{X})$, is the
coefficient of $\vec{T}^{\vec{k}}$ in $F(\vec{X}+\vec{T})$. In other words 
$$F(\vec{X}+\vec{T})=\sum_{\vec{k}} F^{(\vec{k})}(\vec{X})\vec{T}^{\vec{k}}.$$
\end{definition}
The concept of multiplicity for univariate polynomials is generalized
to multivariate polynomials in the following way:
\begin{definition}\label{defmult}
For $F(\vec{X}) \in {\mathbf{F}}[\vec{X}]\backslash \{ {0} \}$ and
$\vec{a}\in {\mathbf{F}}^m$ we define the multiplicity of $F$ at $\vec{a}$
denoted by ${\mbox{mult}}(F,\vec{a})$ as follows: Let $r$ be an
integer such that for every $\vec{k}=(k_1, \ldots ,
k_m) \in {\mathbf{N}}_0^m$ with $k_1+\cdots +k_m < r$, $F^{(\vec{k})}(\vec{a})=0$
  holds, but for some  $\vec{k}=(k_1, \ldots ,
k_m) \in {\mathbf{N}}_0^m$ with $k_1+\cdots +k_m = r$,
$F^{(\vec{k})}(\vec{a})\neq 0$ holds, then 
${\mbox{mult}}(F,\vec{a})=r$. If $F=0$ then we define ${\mbox{mult}}(F,\vec{a})=\infty$.
\end{definition}
Elaborating on the results in~\cite{dvir} we find:
\begin{theorem}\label{prop-sz-gen}
Let $F(\vec{X}) \in {\mathbf{F}}[\vec{X}]$ be a non-zero polynomial and
let $X_1^{i_1} \cdots X_m^{i_m}$ be its leading
monomial with respect to a lexicographic ordering. Then for
any finite sets $S_1, \ldots ,S_m \subseteq {\mathbf{F}}$
\begin{eqnarray}
\sum_{\vec{a}\in S_1 \times \cdots \times S_m}{\mbox{mult}}(F,\vec{a})
\leq i_1s_2\cdots s_m+s_1i_2s_3 \cdots s_m+\cdots +s_1\cdots s_{m-1}i_m.\nonumber
\end{eqnarray}
\end{theorem}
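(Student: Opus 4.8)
The plan is to prove this multiplicity-weighted Schwartz–Zippel bound by induction on the number of variables $m$, mirroring the structure of the proof of the ordinary (unweighted) Schwartz–Zippel bound in Theorem~\ref{thorgsz} but carrying the multiplicity data along. The base case $m=1$ is the classical statement that a univariate polynomial whose leading term is $X_1^{i_1}$ has at most $i_1$ zeros counted with multiplicity, which follows from the fact that $\sum_{a\in S_1}\mathrm{mult}(F,a)\le \deg F=i_1$. The essential tool throughout will be the behaviour of Hasse derivatives under restriction: if I set $X_m=\alpha$ for a fixed $\alpha\in S_m$, the multiplicity of $F$ at a point $(\vec{b},\alpha)$ relates to the multiplicities of the various partial-derivative polynomials $F^{(0,\ldots,0,k)}(\vec{X})|_{X_m=\alpha}$ at $\vec{b}$, via the additivity property $\mathrm{mult}(F,(\vec{b},\alpha))\ge \sum$ over a suitable stratification.

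The concrete strategy I would follow is to peel off the last variable. Write $F(\vec{X})$ as a polynomial in $X_m$ with coefficients in $\mathbf{F}[X_1,\ldots,X_{m-1}]$, and for each fixed $\vec{b}=(b_1,\ldots,b_{m-1})$ consider the univariate specialization $g_{\vec{b}}(X_m)=F(\vec{b},X_m)$. The key inequality from the Hasse-derivative calculus (this is exactly the ingredient refined from~\cite{dvir}) is
\begin{eqnarray}
\sum_{\alpha\in S_m}\mathrm{mult}(F,(\vec{b},\alpha))\ \le\ s_m\cdot \mathrm{mult}_{X_m\text{-direction}}(\ldots) + \sum_{\alpha\in S_m}\mathrm{mult}(g_{\vec{b}},\alpha),\nonumber
\end{eqnarray}
which splits the total multiplicity at points lying over $\vec{b}$ into a part governed by how deeply $F$ vanishes identically along the fibre and a part governed by the univariate restriction. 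Summing over all $\vec{b}\in S_1\times\cdots\times S_{m-1}$, the first contribution is controlled by applying the induction hypothesis to an appropriate $X_m$-coefficient polynomial of $F$ (whose lexicographic leading monomial I must identify as $X_1^{i_1}\cdots X_{m-1}^{i_{m-1}}$), yielding the terms with factor $s_m$, namely $i_1 s_2\cdots s_{m-1}s_m+\cdots+s_1\cdots s_{m-2}i_{m-1}s_m$; the second contribution is bounded by $i_m$ for each $\vec{b}$, giving the final term $s_1\cdots s_{m-1}i_m$.

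The main obstacle I anticipate is bookkeeping the leading monomial under the lexicographic order while switching between the two roles $F$ plays. Because the ordering is lexicographic with (say) $X_1$ most significant and $X_m$ least significant, the leading monomial $X_1^{i_1}\cdots X_m^{i_m}$ means that, writing $F=\sum_{j} F_j(X_1,\ldots,X_{m-1})X_m^{j}$, the top index in $X_m$ that survives is exactly $i_m$ and the leading term of the corresponding coefficient $F_{i_m}$ in the induced lexicographic order on fewer variables is $X_1^{i_1}\cdots X_{m-1}^{i_{m-1}}$. Getting this compatibility exactly right — and verifying that the Hasse-derivative multiplicity-splitting inequality is sharp enough to produce precisely these exponents rather than a weaker bound — is the delicate point; the rest is summation. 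I would therefore first isolate and prove the univariate multiplicity lemma and the fibrewise splitting inequality as standalone facts, then assemble the induction, confirming at each stage that the contributions telescope into the stated sum.
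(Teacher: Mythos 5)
Your overall plan --- induction on $m$ with a fibrewise Hasse-derivative splitting in the style of~\cite{dvir} --- is the right one (the paper itself offers no proof of this theorem, deferring precisely to that method), but the compatibility step you yourself flag as the delicate point is resolved incorrectly, and as stated the induction breaks. With your convention that $X_1$ is most significant and $X_m$ least significant, it is false that ``the top index in $X_m$ that survives is exactly $i_m$'': writing $F=\sum_j F_j(X_1,\ldots,X_{m-1})X_m^j$, the exponent $i_m$ is the largest $j$ among those coefficients $F_j$ whose leading monomial attains the lex-maximum $X_1^{i_1}\cdots X_{m-1}^{i_{m-1}}$, whereas $\deg_{X_m}F$ may be strictly larger, coming from coefficients with lex-smaller leading monomials. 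Concretely, for $F=X_1+X_2^5$ with $X_2\prec_{\mathrm{lex}}X_1$ one has $i_1=1$, $i_2=0$ but $\deg_{X_2}F=5$; and your per-fibre bound fails: if $b\in S_1$ and $\alpha\in S_2$ satisfy $\alpha^5=-b$ with $b\neq 0$, the fibre over $b$ contributes at least $1$ to the multiplicity sum, while your inequality caps it by $s_2\cdot\mathrm{mult}(F_{i_2},b)+i_2=0$, since $F_{i_2}=F_0=X_1$ does not vanish at $b$. Nor can you escape by instead using the true top coefficient $F_t$, $t=\deg_{X_m}F$: then the univariate term becomes $t$ rather than $i_m$, and the fact that $\mathrm{LM}(F_t)$ is lex-smaller than $X_1^{i_1}\cdots X_{m-1}^{i_{m-1}}$ is useless, because the Schwartz--Zippel-type quantity is not monotone with respect to the lexicographic order.

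The repair is to peel the \emph{most} significant variable. Writing $F=\sum_{j=0}^{t}G_j(X_2,\ldots,X_m)X_1^j$, the fact that $X_1$ is compared first gives you both facts you need simultaneously: $t=\deg_{X_1}F=i_1$, and $\mathrm{LM}(G_{i_1})=X_2^{i_2}\cdots X_m^{i_m}$ in the induced lex order on $X_2,\ldots,X_m$. The fibrewise claim then reads: for every $\vec{a}\in S_2\times\cdots\times S_m$, $\sum_{b\in S_1}\mathrm{mult}(F,(b,\vec{a}))\le s_1\,\mathrm{mult}(G_{i_1},\vec{a})+i_1$. To prove it, choose $\vec{k}\in{\mathbf{N}}_0^{m-1}$ with $|\vec{k}|=\mathrm{mult}(G_{i_1},\vec{a})$ and $G_{i_1}^{(\vec{k})}(\vec{a})\neq0$, and observe that $H=F^{(0,\vec{k})}$ has $G_{i_1}^{(\vec{k})}$ as its coefficient of $X_1^{i_1}$, so $H(X_1,\vec{a})$ is a nonzero univariate polynomial of degree exactly $i_1$, while $\mathrm{mult}(F,(b,\vec{a}))\le|\vec{k}|+\mathrm{mult}(H,(b,\vec{a}))\le|\vec{k}|+\mathrm{mult}(H(X_1,\vec{a}),b)$. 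Summing over $b\in S_1$, then over $\vec{a}$, and applying the induction hypothesis to $G_{i_1}$ with the sets $S_2,\ldots,S_m$ yields $i_1s_2\cdots s_m+s_1\big(i_2s_3\cdots s_m+\cdots+s_2\cdots s_{m-1}i_m\big)$, which is the asserted bound. With this single change of peeled variable, the rest of your outline (base case, restriction inequality, derivative-order bookkeeping) goes through.
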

Theorem~\ref{thorgsz} now generalizes to the
following result which we call the Schwartz-Zippel bound.
\begin{corollary}\label{cor-sz-gen}
Let $F(\vec{X}) \in {\mathbf{F}}[\vec{X}]$ be a non-zero polynomial and
let $X_1^{i_1} \cdots X_m^{i_m}$ be its leading
monomial with respect to a lexicographic ordering. Assume $S_1, \ldots
,S_m \subseteq {\mathbf{F}}$ are finite sets. 
Then over
$S_1 \times \cdots \times S_m $ the number
of zeros of multiplicity at least $r$ is less than or equal to 
\begin{eqnarray}
\big( i_1s_2\cdots s_m+s_1i_2s_3\cdots s_m+\cdots +s_1\cdots
s_{m-1}i_m \big) /r.\label{szbound}
\end{eqnarray}
\end{corollary}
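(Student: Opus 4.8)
The plan is to derive Corollary~\ref{cor-sz-gen} directly from Theorem~\ref{prop-sz-gen} by the standard averaging argument. The key observation is that each point counted in the corollary contributes a large amount to the sum appearing in the theorem, so a bound on the sum immediately limits how many such heavy points there can be.

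First I would introduce notation: let $Z_r$ denote the set of points $\vec{a}\in S_1\times\cdots\times S_m$ with $\mathrm{mult}(F,\vec{a})\geq r$, and write $N$ for the right-hand bound $i_1s_2\cdots s_m+s_1i_2s_3\cdots s_m+\cdots+s_1\cdots s_{m-1}i_m$ from Theorem~\ref{prop-sz-gen}. The goal is to show $|Z_r|\leq N/r$. Next I would restrict the sum in Theorem~\ref{prop-sz-gen} to the subset $Z_r$, using that multiplicities are non-negative so that
$$\sum_{\vec{a}\in Z_r}\mathrm{mult}(F,\vec{a})\leq\sum_{\vec{a}\in S_1\times\cdots\times S_m}\mathrm{mult}(F,\vec{a})\leq N.$$
Then, since every $\vec{a}\in Z_r$ satisfies $\mathrm{mult}(F,\vec{a})\geq r$ by definition of $Z_r$, I would bound the left-hand side below by $r\,|Z_r|$, giving $r\,|Z_r|\leq N$. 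Dividing by $r$ yields $|Z_r|\leq N/r$, which is exactly the claimed inequality~(\ref{szbound}).

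The only point requiring any care is the passage from the full sum over $S_1\times\cdots\times S_m$ to the partial sum over $Z_r$; this is valid precisely because $\mathrm{mult}(F,\vec{a})\geq 0$ for all $\vec{a}$, so discarding terms outside $Z_r$ can only decrease the sum. I do not anticipate a genuine obstacle here: all the substantive work has already been carried out in establishing Theorem~\ref{prop-sz-gen}, and the corollary is a one-line consequence via the pigeonhole/averaging principle. I would simply remark that the bound $|Z_r|\leq N/r$ is stated as a real number and the number of points is of course an integer, so one may additionally take the floor $\lfloor N/r\rfloor$ if an integer bound is desired, though the statement as given does not require this refinement.
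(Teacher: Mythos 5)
Your proof is correct and is precisely the argument the paper intends: the corollary is stated as an immediate consequence of Theorem~\ref{prop-sz-gen}, and the standard averaging step you give (restricting the sum to the points of multiplicity at least $r$, bounding it below by $r\,|Z_r|$, and dividing by $r$) is exactly how that passage works. No gap; nothing further is needed.
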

Just as we were able to improve upon Theorem~\ref{thorgsz} by using the
inclusion-exclusion principle we will also be able to improve upon
Corollary~\ref{cor-sz-gen}. However, now the situation is much more
complex and therefore the inclusion-exclusion principle is no longer
sufficient. What we need to strengthen Corollary~\ref{cor-sz-gen} is
the following rather technical function: 
\begin{definition}\label{defD}
Let $r \in {\mathbf{N}}, i_1, \ldots , i_m \in {\mathbf{N}}_0$. Define 
$$D(i_1,r,s_1)=\min \big\{\big\lfloor \frac{i_1}{r} \big\rfloor,s_1\big\}$$
and for $m \geq 2$
\begin{multline*}
D(i_1, \ldots , i_m,r,s_1, \ldots ,s_m)=
\\
\begin{split}
\max_{(u_1, \ldots  ,u_r)\in A(i_m,r,s_m) }&\bigg\{ (s_m-u_1-\cdots -u_r)D(i_1,\ldots ,i_{m-1},r,s_1,
\ldots ,s_{m-1})\\
&\quad+u_1D(i_1, \ldots , i_{m-1},r-1,s_1, \ldots ,s_{m-1})+\cdots
\\
&\quad +u_{r-1}D(i_1, \ldots ,i_{m-1},1,s_1, \ldots , s_{m-1})+u_rs_1\cdots
s_{m-1} \bigg\}
\end{split}
\end{multline*}
where 
\begin{multline}
A(i_m,r,s_m)= \nonumber \\
\{ (u_1, \ldots , u_r) \in {\mathbf{N}}_0^r \mid u_1+ \cdots
+u_r \leq s_m {\mbox{ \ and \ }} u_1+2u_2+\cdots +ru_r \leq i_m\}.\nonumber
\end{multline}
\end{definition}
\begin{theorem}\label{prorec}
For a polynomial $F(\vec{X})\in {\mathbf{F}}[\vec{X}]$ let $X_1^{i_1}\cdots X_m^{i_m}$ be its leading monomial with
respect to the lexicographic ordering with $X_m\prec \cdots \prec X_1$. Then $F$ has at most $D(i_1, \ldots , i_m,r,s_1,
\ldots ,s_m)$ zeros of multiplicity at least $r$ in $S_1\times \cdots
\times S_m$. The corresponding recursive algorithm produces a number
that is at most equal to the number found in
Corollary~\ref{cor-sz-gen} and at most equal to $s_1 \cdots s_m$.
\end{theorem}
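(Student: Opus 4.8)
The plan is to prove the bound by induction on the number of variables $m$, peeling off the smallest variable $X_m$ and summing contributions slice by slice over $S_m$. For $m=1$ the statement is the classical fact that a nonzero univariate polynomial of degree $i_1$ has at most $\lfloor i_1/r\rfloor$ roots of multiplicity at least $r$, and of course at most $s_1$ roots in $S_1$; this is exactly $D(i_1,r,s_1)$. Throughout I would use three standard facts about Hasse derivatives and lexicographic leading monomials: multiplicity is additive over products, restriction to an affine subspace through a point can only increase the multiplicity there, and the leading monomial is multiplicative, $\mathrm{LM}(PQ)=\mathrm{LM}(P)\,\mathrm{LM}(Q)$. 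I would also record as an easy preliminary lemma that $D$ is nondecreasing in each $i_k$ and nonincreasing in $r$, both proved by the same induction: a larger $i_m$ only enlarges the feasible set $A(i_m,r,s_m)$, and a larger $i_k$ with $k<m$ only enlarges the inner $D$-values.

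The key device in the inductive step is the coefficient $\phi(X_m)\in\mathbf{F}[X_m]$ of the monomial $\mu'=X_1^{i_1}\cdots X_{m-1}^{i_{m-1}}$ in $F$, viewed as a polynomial in the remaining variable. Since the ordering has $X_m$ smallest, $\mathrm{LM}(F)=\mu'X_m^{i_m}$ forces $\deg\phi=i_m$ with nonzero leading coefficient. For each $a_m\in S_m$ set $e(a_m)=\mathrm{ord}_{X_m=a_m}(\phi)$, the multiplicity of $a_m$ as a root of $\phi$; then $\sum_{a_m\in S_m}e(a_m)\le\deg\phi=i_m$, which is the budget that will place the resulting profile inside $A(i_m,r,s_m)$. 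Concretely I would set $u_\ell=\#\{a_m: e(a_m)=\ell\}$ for $1\le\ell\le r-1$ and $u_r=\#\{a_m:e(a_m)\ge r\}$; the two defining inequalities of $A$, namely $\sum_\ell u_\ell\le s_m$ and $\sum_\ell \ell u_\ell\le i_m$, are then immediate.

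The heart of the argument is the per-slice bound. Writing $N_{a_m}$ for the number of $\vec b\in S_1\times\cdots\times S_{m-1}$ with $\mathrm{mult}(F,(\vec b,a_m))\ge r$, I would factor out the largest power $(X_m-a_m)^{j}$ dividing $F$, say $F=(X_m-a_m)^{j}G$ with $g:=G|_{X_m=a_m}\not\equiv 0$. Additivity of multiplicity gives $\mathrm{mult}(G,(\vec b,a_m))\ge r-j$, and restriction to the hyperplane gives $\mathrm{mult}(g,\vec b)\ge r-j$; the inductive hypothesis applied to $g$ then bounds $N_{a_m}$ by $D$ evaluated at $\mathrm{LM}(g)$ and $r-j$. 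The claim I must establish is that this is at most $D(i_1,\dots,i_{m-1},\,r-e(a_m),\,s_1,\dots,s_{m-1})$, with the convention that $e(a_m)\ge r$ gives the trivial slot $s_1\cdots s_{m-1}$. When $e(a_m)=j$ (in particular the principal case $e=j=0$) the coefficient of $\mu'$ in $g$ survives, so $\mathrm{LM}(g)=\mu'=X_1^{i_1}\cdots X_{m-1}^{i_{m-1}}$ exactly and the bound follows at once from coordinatewise monotonicity of $D$. Granting the per-slice claim for every $a_m$, summing over $S_m$ reproduces precisely the expression defining $D$ at the profile $(u_1,\dots,u_r)\in A(i_m,r,s_m)$, which is at most the maximum, i.e.\ $D(i_1,\dots,i_m,r,s_1,\dots,s_m)$.

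The main obstacle is exactly the per-slice claim in the degenerate case $j<e(a_m)$: here the coefficient of $\mu'$ in $g$ vanishes, so $\mathrm{LM}(g)\plex\mu'$ but the two exponent vectors need not be comparable coordinatewise (a drop in the $X_1$-degree may be accompanied by a large $X_2$-degree), and $D$ is genuinely not monotone for the lexicographic order. What rescues the estimate is that the loss in the leading monomial is paid for by a strictly larger multiplicity demand: one has $r-j>r-e(a_m)$, and I would prove a comparison lemma showing $D(\mathrm{LM}(g),\,r-j,\,s_1,\dots,s_{m-1})\le D(\mu',\,r-e(a_m),\,s_1,\dots,s_{m-1})$ whenever $\mathrm{LM}(g)\plex\mu'$ arises from a coefficient vanishing to order $e(a_m)>j$. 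I expect this lemma — trading lexicographic descent against multiplicity — to be the technical crux, proved by a secondary induction mirroring the structure of $A$. Finally, the two comparison statements are clean inductions on $m$: for $D\le s_1\cdots s_m$ one bounds every inner $D$-value by $s_1\cdots s_{m-1}$ and uses $(s_m-\sum_\ell u_\ell)+\sum_{\ell=1}^{r-1}u_\ell+u_r=s_m$; for the Schwartz--Zippel comparison one bounds each inner $D$-value by the $(m-1)$-variable bound of Corollary~\ref{cor-sz-gen} and checks, using $\sum_\ell\ell u_\ell\le i_m$ and $r\,u_r\le i_m$, that every profile in $A(i_m,r,s_m)$ yields at most $\big(i_1s_2\cdots s_m+\cdots+s_1\cdots s_{m-1}i_m\big)/r$.
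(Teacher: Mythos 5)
Your global architecture is the right one --- and, for the record, the paper omits the proof of Theorem~\ref{prorec} entirely (it says only that it is ``a modification of the method described in~\cite{dvir}''), so the comparison has to be made against that method. Adapted to the lexicographic setting, it proceeds exactly by your slicing: take $\phi(X_m)$, the coefficient of $\mu'=X_1^{i_1}\cdots X_{m-1}^{i_{m-1}}$ in $F$, note $\deg\phi=i_m$, set $e(a_m)=\mathrm{ord}_{X_m=a_m}(\phi)$, observe that the induced profile $(u_1,\ldots,u_r)$ lies in $A(i_m,r,s_m)$, and sum per-slice bounds. However, your proof of the per-slice bound has a genuine gap exactly where you flag it, and the comparison lemma you propose to close it is \emph{false}, even restricted to configurations that actually arise. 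Take $m=3$, order $X_3\prec X_2\prec X_1$, $s_1=s_2=10$, $r=2$, $a\in S_3$, and $F=X_1^2X_2(X_3-a)+X_1X_2^{100}$. Then $\mathrm{LM}(F)=X_1^2X_2X_3$, so $\mu'=X_1^2X_2$, $\phi=X_3-a$, $e(a)=1$; but $F|_{X_3=a}=X_1X_2^{100}\neq 0$, so $j=0$ and $g=X_1X_2^{100}$. Your factoring step bounds this slice by $D(1,100,2,10,10)=100=s_1s_2$ (take $u_2=10$ in the recursion), while the per-slice claim needs $D(2,1,1,10,10)=28$. No comparison lemma can repair this, because the only data the factoring retains is $\mathrm{LM}(g)$ and $r-j$, and the containment $\{\vec b:\mathrm{mult}(F,(\vec b,a))\geq r\}\subseteq\{\vec b:\mathrm{mult}(g,\vec b)\geq r-j\}$ is far from equality: the worst case over all polynomials with leading monomial $\mathrm{LM}(g)$ is genuinely larger than anything your particular $F$ can produce.

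The device that works --- and this is the actual content of the method of~\cite{dvir} that the paper invokes --- is differentiation instead of factoring. For a slice with $e=e(a_m)<r$, let $H=F^{(0,\ldots,0,e)}$ be the Hasse derivative of order $e$ in the variable $X_m$ alone, and put $h=H|_{X_m=a_m}$. Writing $F=\sum_{\alpha'}c_{\alpha'}(X_m)X_1^{\alpha_1}\cdots X_{m-1}^{\alpha_{m-1}}$, this derivative acts coefficient-wise, so the support of $h$ in $X_1,\ldots,X_{m-1}$ is contained in that of $F$, while the coefficient of $\mu'$ in $h$ equals the value at $a_m$ of the $e$-th Hasse derivative of $\phi$, which is nonzero precisely because $\phi$ vanishes to order exactly $e$ at $a_m$. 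Hence $\mathrm{LM}(h)=\mu'$ \emph{exactly}, with no degenerate case, and the two multiplicity facts (restriction to $X_m=a_m$ does not decrease multiplicity; a Hasse derivative of order $e$ decreases it by at most $e$ --- the latter is proved in~\cite{dvir}) give $\mathrm{mult}(h,\vec b)\geq\mathrm{mult}(F,(\vec b,a_m))-e\geq r-e$. The induction hypothesis applied to $h$ yields the per-slice bound $D(i_1,\ldots,i_{m-1},r-e,s_1,\ldots,s_{m-1})$, and the rest of your argument (profile in $A$, summation, the two final comparisons) then goes through; in the example above, $H=h=X_1^2X_2$. One smaller repair: in the comparison with Corollary~\ref{cor-sz-gen}, bounding the inner $D$-values only by the $(m-1)$-variable Schwartz-Zippel quantity $\sigma'=i_1s_2\cdots s_{m-1}+\cdots+s_1\cdots s_{m-2}i_{m-1}$ is not enough (the slot $\ell=r-1$ would contribute $\sigma'$ per slice, which can exceed the target); you must use both inductive bounds simultaneously, via $\min\{\sigma'/(r-\ell),\,s_1\cdots s_{m-1}\}\leq(\sigma'+\ell\,s_1\cdots s_{m-1})/r$ for $0\leq\ell\leq r$, after which the two constraints defining $A(i_m,r,s_m)$ finish the calculation.
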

\begin{proof}
The proof involves a modification of the method described in~\cite{dvir}. Due to lack of space we do not include it here. 
\end{proof}
\begin{remark}\label{rembig}
Given $(i_1, \ldots , i_m, r, s_1, \ldots ,s_m)$ with $\lfloor
i_1/s_1\rfloor+\cdots +\lfloor i_m/s_m\rfloor \geq r$ then there exist
polynomials with the leading monomial being $X_1^{i_1} \cdots X_m^{i_m}$ such that all points in $S_1 \times \cdots \times S_m$ are
zeros of multiplicity at least $r$. 
Hence, we need only apply
the algorithm to tuples $(i_1, \ldots , i_m)$ such that
\begin{equation}
 \lfloor
i_1/s_1\rfloor+\cdots +\lfloor i_m/s_m\rfloor < r.       \label{eqhak}
\end{equation}
\end{remark}
In a
number of experiments listed in~\cite{hmpage} we calculated the value $D(i_1, \ldots ,
i_m,r,q, \ldots ,q)$ for various choices of $m$, $q$ and $r$ and for all values of
$(i_1, \ldots , i_m)$ satisfying~(\ref{eqhak}).
We here list the
maximal attained improvement obtained by using Proposition~\ref{prorec} rather
than using Corollary~\ref{cor-sz-gen}. We do this relatively to the number of points $q^m$. In other words we list in Table~\ref{tabny5} the value
$$\bigg(\max_{i_1, \ldots , i_m} \{\min\{ (i_1+\cdots
i_m)q^{m-1}/r,q^m\}-D(i_1, \ldots , i_m,r,q, \ldots , q)\}\bigg)/q^m$$
for various choices of $m, q, r$. \\
The experiments also show  distinct average improvement. This is
illustrated in Table~\ref{tabendnuenny1} where for fixed $q, r, m$ we list
the mean value of
\begin{equation}
\frac{ \min \{(i_1+\cdots + i_m)q^{m-1},q^m\}-D(i_1, \ldots ,
i_m,r,q, \ldots , q)}{\min \{(i_1+\cdots +
i_m)q^{m-1},q^m\}}.\label{eqangle}
\end{equation}
The average is taken over the set of exponents $(i_1, \ldots ,
i_m)\neq \vec{0}$ where 
$\lfloor i_1/q\rfloor+\cdots +\lfloor i_m/q\rfloor
< r$  
holds.

\begin{table}
\centering
\caption{Maximum improvements relative to $q^m$; truncated}
\newcommand{\SP}{~~}
\begin{tabular}{@{}c@{}c@{~~~~}r@{.}l@{\SP}r@{.}l@{\SP}r@{.}l@{\SP}r@{.}l@{\SP}r@{.}l@{\SP}r@{.}l@{\SP}r@{.}l@{\SP}r@{.}l@{\SP}r@{.}l@{\SP}r@{.}l@{}}
\toprule
$m$&& \multicolumn{8}{c}{2} & \multicolumn{8}{c}{3} & \multicolumn{4}{c}{4} \\
\cmidrule(r){3-10} \cmidrule(r){11-18} \cmidrule{19-22}
$r$&&\multicolumn{2}{l}{2}&\multicolumn{2}{l}{3}&\multicolumn{2}{l}{4}&\multicolumn{2}{l}{5}&\multicolumn{2}{l}{2}&\multicolumn{2}{l}{3}&\multicolumn{2}{l}{4}&\multicolumn{2}{l}{5}&\multicolumn{2}{l}{2}&\multicolumn{2}{l}{3}\\
\addlinespace
\multirow{7}{*}{$q$}
&\multicolumn{1}{c}{2}&0&25 &0&25 &0&25 &0&25 &0&25 &0&375&0&375&0&375&0&312&0&375\\
&\multicolumn{1}{c}{3}&0&222&0&222&0&222&0&222&0&296&0&296&0&296&0&296&0&296&0&333\\
&\multicolumn{1}{c}{4}&0&187&0&187&0&187&0&187&0&281&0&25 &0&25 &0&265&0&316&0&289\\
&\multicolumn{1}{c}{5}&0&24 &0&16 &0&16 &0&2  &0&256&0&256&0&232&0&24 &0&307&0&288\\
&\multicolumn{1}{c}{7}&0&204&0&204&0&163&0&142&0&279&0&244&0&227&0&209&0&299&0&276\\
&\multicolumn{1}{c}{8}&0&234&0&203&0&171&0&140&0&275&0&25 &0&214&0&203&0&299&0&275\\  
\bottomrule
\end{tabular}
\label{tabny5}
\end{table}
\begin{table}[!h]
\centering
\caption{The mean value of (\ref{eqangle}); truncated}
\newcommand{\SP}{~~}
\begin{tabular}{@{}c@{}c@{~~~~}r@{.}l@{\SP}r@{.}l@{\SP}r@{.}l@{\SP}r@{.}l@{\SP}r@{.}l@{\SP}r@{.}l@{\SP}r@{.}l@{\SP}r@{.}l@{\SP}r@{.}l@{\SP}r@{.}l@{}}
\toprule
$m$&& \multicolumn{8}{c}{2} & \multicolumn{8}{c}{3} & \multicolumn{4}{c}{4} \\
\cmidrule(r){3-10} \cmidrule(r){11-18} \cmidrule{19-22}
$r$&&\multicolumn{2}{l}{2}&\multicolumn{2}{l}{3}&\multicolumn{2}{l}{4}&\multicolumn{2}{l}{5}&\multicolumn{2}{l}{2}&\multicolumn{2}{l}{3}&\multicolumn{2}{l}{4}&\multicolumn{2}{l}{5}&\multicolumn{2}{l}{2}&\multicolumn{2}{l}{3}\\
\addlinespace
\multirow{7}{*}{$q$}
&\multicolumn{1}{c}{2}&0&363&0&273&0&337&0&291&0&301&0&300&0&342&0&307&0&248&0&260\\
&\multicolumn{1}{c}{3}&0&217&0&286&0&228&0&236&0&194&0&224&0&213&0&214&0&158&0&177\\
&\multicolumn{1}{c}{4}&0&191&0&197&0&232&0&195&0&158&0&169&0&180&0&172&0&125&0&135\\
&\multicolumn{1}{c}{5}&0&155&0&167&0&174&0&197&0&139&0&145&0&148&0&153&0&110&0&116\\
&\multicolumn{1}{c}{7}&0&128&0&137&0&138&0&138&0&119&0&122&0&121&0&119&0&093&0&098\\
&\multicolumn{1}{c}{8}&0&126&0&127&0&134&0&126&0&114&0&115&0&113&0&111&0&089&0&093\\
\bottomrule
\end{tabular}
\label{tabendnuenny1}
\end{table}
The values $D(i_1, \ldots ,i_m,r, s_1, \ldots , s_m)$ may sometimes be
time consuming to
calculate. Therefore it is relevant to have some closed formula
estimates of these numbers. We next present such estimates for the
case of two variables. By
Remark~\ref{rembig} the following proposition covers all non-trivial cases:\\
\begin{proposition}\label{protwovar}
For $k=1, \ldots , r-1$,  $D(i_1,i_2,r,s_1,s_2)$ is upper bounded by\\
$\begin{array}{cl}
{\mbox{(C.1)}}&  {\displaystyle{s_2\frac{i_1}{r}+\frac{i_2}{r}\frac{i_1}{r-k}}}\\
&{\mbox{if \  }}(r-k)\frac{r}{r+1}s_1 \leq i_1 < (r-k)s_1
{\mbox{ \ and \ }} 0\leq i_2 <ks_2,\\
{\mbox{(C.2)}}&
  {\displaystyle{s_2\frac{i_1}{r}+((k+1)s_2-i_2)(\frac{i_1}{r-k}-\frac{i_1}{r})+(i_2-ks_2)(s_1-\frac{i_1}{r})}}\\
& {\mbox{if \ }}(r-k)\frac{r}{r+1}s_1 \leq i_1 < (r-k)s_1 {\mbox{ \
    and \ }} ks_2\leq i_2 <(k+1)s_2,\\
{\mbox{(C.3)}}&
{\displaystyle{s_2\frac{i_1}{r}+\frac{i_2}{k+1}(s_1-\frac{i_1}{r})}}\\
&{\mbox{if \ }} (r-k-1)s_1 \leq i_1 < (r-k)\frac{r}{r+1}s_1 {\mbox{ \
    and \ }} 0 \leq i_2 < (k+1)s_2.
\end{array}
$\\
Finally,\\
$\begin{array}{cl}
{\mbox{(C.4)}}& {\displaystyle{D(i_1,i_2,r,s_1,s_2)=s_2\lfloor \frac{i_1}{r} \rfloor
  +i_2(s_1-\lfloor \frac{i_1}{r} \rfloor )}}\\
& {\mbox{if \ }} s_1(r-1) \leq i_1 < s_1r {\mbox{ \ and \ }} 0 \leq i_2 < s_2.
\end{array}
$\\
The above numbers are at most equal to $\min\{(i_1s_2+s_1i_2)/r, s_1s_2 \}$.
\end{proposition}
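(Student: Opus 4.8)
The plan is to recognise $D(i_1,i_2,r,s_1,s_2)$ as the optimal value of an integer linear program and to bound it by producing explicit dual solutions. Writing $c_j=D(i_1,j,s_1)=\min\{\lfloor i_1/j\rfloor,s_1\}$ for $j\geq 1$ and $c_0=s_1$, Definition~\ref{defD} in the case $m=2$ reads
$$D(i_1,i_2,r,s_1,s_2)=\max_{(u_1,\ldots,u_r)\in A(i_2,r,s_2)}\Big\{ s_2 c_r+\sum_{l=1}^r u_l\,(c_{r-l}-c_r)\Big\},$$
since the coefficient of $u_l$ is $c_{r-l}$, with $c_0=s_1$ absorbing the term $u_r s_1$. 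Setting $\Delta_l=c_{r-l}-c_r\geq 0$, the task is to maximise the linear functional $\sum_l u_l\Delta_l$ over the integer points of the polytope cut out by $\sum_l u_l\leq s_2$ and $\sum_l l\,u_l\leq i_2$.

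First I would pass to a continuous relaxation. Since every $\Delta_l\geq 0$ and $s_2-\sum_l u_l\geq 0$ on $A(i_2,r,s_2)$, replacing each $c_j$ by $\tilde c_j=\min\{i_1/j,s_1\}\geq c_j$ only increases the objective, and dropping integrality increases it further; hence, for (C.1)--(C.3), $D$ is at most the value of the relaxed program with coefficients $\tilde c_j$. I would then read off $\tilde c_j$ in each range of $i_1$: under the hypotheses of (C.1)--(C.3) one has $(r-k-1)s_1\leq i_1<(r-k)s_1$, so $\tilde c_j=s_1$ for $j\leq r-k-1$ and $\tilde c_j=i_1/j$ for $j\geq r-k$, giving $\Delta_l=i_1/(r-l)-i_1/r$ for $l\leq k$ and $\Delta_l=s_1-i_1/r$ for $l\geq k+1$.

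The core is linear-programming duality for the two inequality constraints: any $(y_1,y_2)\geq 0$ with $y_1+l\,y_2\geq\Delta_l$ for every $l$ yields the upper bound $s_2 c_r+s_2 y_1+i_2 y_2$. For (C.1) I would take $y_1=0$, $y_2=i_1/(r(r-k))$; for (C.3), $y_1=0$, $y_2=(s_1-i_1/r)/(k+1)$; and for (C.2) the solution forced by complementary slackness against the primal feasible point $u_k=(k+1)s_2-i_2$, $u_{k+1}=i_2-ks_2$ (all other $u_l=0$), namely $y_2=s_1-i_1/(r-k)$ and $y_1=(k+1)i_1/(r-k)-i_1/r-ks_1$. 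In each case a direct substitution shows $s_2 c_r+s_2 y_1+i_2 y_2$ equals the stated expression. The one delicate point, and the main obstacle, is verifying dual feasibility $y_1+l\,y_2\geq\Delta_l$ simultaneously for all $l\in\{1,\ldots,r\}$: because $\Delta_l$ is convex in $l$ up to $l=k$ and then constant, the interpolating line must be checked to dominate on both pieces. This is exactly where the threshold $(r-k)\frac{r}{r+1}s_1$ enters — it is the value of $i_1$ at which the two ``value per unit of $i_2$'' ratios $\Delta_k/k$ and $\Delta_{k+1}/(k+1)$ coincide — so that $i_1\geq(r-k)\frac{r}{r+1}s_1$ certifies the feasibility (and the non-negativity $y_1\geq 0$) needed in (C.1)/(C.2), while $i_1<(r-k)\frac{r}{r+1}s_1$ does so for (C.3).

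Case (C.4) is an equality, which I would handle separately with the exact coefficients $c_j$. When $s_1(r-1)\leq i_1<s_1 r$ one has $c_r=\lfloor i_1/r\rfloor$ and $c_j=s_1$ for all $j\leq r-1$, so $\Delta_l=s_1-\lfloor i_1/r\rfloor$ for every $l$; since $l\geq 1$ forces $\sum_l u_l\leq\sum_l l\,u_l\leq i_2<s_2$, the maximum of $(s_1-\lfloor i_1/r\rfloor)\sum_l u_l$ is $(s_1-\lfloor i_1/r\rfloor)\,i_2$, attained at the integer point $u_1=i_2$, giving equality. Finally, for the comparison with $\min\{(i_1 s_2+s_1 i_2)/r,\,s_1 s_2\}$: for (C.1)--(C.3) the dual pair $(0,s_1/r)$ is feasible for the relaxed program precisely because $(r-k-1)s_1\leq i_1<(r-k)s_1$, producing the Schwartz--Zippel value $(i_1 s_2+s_1 i_2)/r$ which dominates the three closed forms, whereas for (C.4) the same inequality is a short direct computation from $i_1\geq(r-1)s_1$; and the bound $s_1 s_2$ holds in all cases termwise, from $c_j\leq s_1$ together with $s_2-\sum_l u_l\geq 0$.
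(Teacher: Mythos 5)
Your proposal is correct in substance, but it cannot be checked against the paper's argument in any detail, because the paper's ``proof'' of Proposition~\ref{protwovar} is a single sentence: the estimates come from treating $i_1$ and $i_2$ as rational rather than integer quantities, with all details omitted for lack of space. Your relaxation step --- replacing $\lfloor i_1/j\rfloor$ by $i_1/j$ in the coefficients $c_j$, and dropping integrality of the $u_l$ --- is exactly what that hint refers to, so you are on the paper's intended route; the linear-programming duality scaffolding (explicit dual certificates $(y_1,y_2)$ for (C.1)--(C.3), complementary slackness for (C.2), exact computation for (C.4)) is your own way of organizing the case analysis, and it buys something the paper never states: it shows (C.1)--(C.3) are the \emph{exact} optima of the relaxed program on their ranges, which explains both the role of the $i_2$-ranges (they govern tightness, not validity of the bound) and the origin of the threshold $(r-k)\frac{r}{r+1}s_1$ as the point where $\Delta_k/k=\Delta_{k+1}/(k+1)$. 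I checked your certificates: feasibility of $\big(0,\tfrac{i_1}{r(r-k)}\big)$ for (C.1) is tightest at $l=k+1$ and reduces to $i_1\geq (r-k)\frac{r}{r+1}s_1$; feasibility of $\big(0,\tfrac{s_1-i_1/r}{k+1}\big)$ for (C.3) is tightest at $l=k$ and reduces to $i_1\leq (r-k)\frac{r}{r+1}s_1$; for (C.2), $y_1\geq 0$ is equivalent to the lower bound on $i_1$, and your convexity argument on $[0,k]$ plus monotonicity for $l\geq k+1$ gives dual feasibility.

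One step does need repair. In the final paragraph you argue that the Schwartz--Zippel value dominates (C.1)--(C.3) because $(0,s_1/r)$ is dual feasible, and that $s_1s_2$ dominates because every primal feasible point is termwise bounded by $s_1s_2$. Both of these statements bound the \emph{LP optimum}, whereas (C.1) and (C.3) are themselves dual-feasible values, i.e.\ upper bounds on that optimum; two upper bounds are not comparable by weak duality alone. The fix is immediate within your own framework: either exhibit the primal points attaining (C.1) and (C.3) (namely $u_k=i_2/k$, resp.\ $u_{k+1}=i_2/(k+1)$, feasible precisely because $i_2<ks_2$, resp.\ $i_2<(k+1)s_2$ --- the same complementary-slackness step you carried out for (C.2)), after which weak duality against $(0,s_1/r)$ and the termwise bound give both dominations; or compare duals directly, noting that all your certificates have $y_1=0$ and that $\tfrac{i_1}{r(r-k)}\leq \tfrac{s_1}{r}$ and $\tfrac{s_1-i_1/r}{k+1}\leq \tfrac{s_1}{r}$ follow from $i_1<(r-k)s_1$ and $i_1\geq(r-k-1)s_1$ respectively. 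With that patch, and the short case analysis you allude to for (C.4) versus the Schwartz--Zippel value, the argument is complete.
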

\begin{proof}
The estimates are found by treating $i_1$ and $i_2$ as rational
numbers rather than integers. Due to lack of space we do not give the
details here.
\end{proof}

{\section{The decoding algorithm}\label{secfire}}
\noindent
The main ingredient of the decoding algorithm is to find an
interpolation polynomial 
$$Q(X_1, \ldots , X_m,Z)=Q_0(X_1, \ldots , X_m)+Q_1(X_1,\ldots ,
X_m)Z+\cdots +Q_t(X_1, \ldots ,X_m)Z^t$$
such that $Q(X_1, \ldots , X_m,F(X_1, \ldots  ,X_m))$ cannot have more
than $n-E$ different zeros of multiplicity at least $r$ whenever
${\mbox{Supp}}(F)\subseteq {\mathbb{M}}$. The integer $E$ above is the
number of errors to be corrected by our list decoding algorithm. To
fulfill this requirement we will define appropriate sets of monomials
$B(i,E,r)$, $i=1, \ldots , t$ and then require $Q_i(X_1, \ldots
,X_m)$ to be chosen such that ${\mbox{Supp}}(Q_i)\subseteq
B(i,E,r)$. Rather than using the results from the previous section on
all possible choices of $F(X_1, \ldots  ,X_m)$ with ${\mbox{Supp}}(F)
\subseteq {\mathbb{M}}$ we only consider the worst cases where the
leading monomial of $F$ is contained in the following set:
\begin{definition}
$$\overline{{\mathbb{M}}}=\{ M \in {\mathbb{M}} \mid {\mbox{ \ if \ }} N\in
{\mathbb{M}} {\mbox{ \ and \ }} M|N {\mbox{ \ then \ }} M=N\}.$$
\end{definition}
Hence, $\overline{{\mathbb{M}}}$ is so to speak the boarder of
${\mathbb{M}}$. 
\begin{definition}
Given positive integers $i,E,r$ with $E<n$ let
$$B(i,E,r)=\{K \in \Delta(r,m) \mid D_r(K M^{i}) < n-E {\mbox{ \ for
    all \ }} M \in \overline{{\mathbb{M}}} \}.$$
Here, $D_r(X_1^{i_1}, \ldots , X_m^{i_m})$ can either be $D(i_1, \ldots
, i_m,r,s_1, \ldots ,s_m)$ or in the case of two variables it can be
the numbers from Proposition~\ref{protwovar}. Another option would be
to let $D_r(X_1^{i_1}, \ldots , X_m^{i_m})$ be the number in~(\ref{szbound}). With a reference to
Remark~\ref{rembig} we have defined
$$\Delta(r,m)=\{ X_1^{i_1}\cdots X_m^{i_m} \mid  \lfloor
i_1/s_1\rfloor+\cdots +\lfloor i_m/s_m\rfloor < r\}.
$$
\end{definition}
The decoding algorithm calls for positive integers $t,E,r$ such that
\begin{equation}
\sum_{i=1}^t |B(i,E,r)| >n N(m,r), \label{eqsnabel}
\end{equation}
where $N(m,r)={{m+r}\choose{m+1}}$ is the number of linear equations to be satisfied
for a point in ${\mathbf{F}}_q^{m+1}$ to be a zero of $Q(X_1, \ldots ,
X_m,Z)$ of multiplicity at least $r$. 
As we will see condition~(\ref{eqsnabel}) ensures that we
can correct $E$ errors. We will say that $(t,E,r)$ satisfies the
initial condition if given the pair $(E,r)$, $t$ is the smallest integer such
that~(\ref{eqsnabel}) is satisfied. Whenever this is the case we
define $B^\prime(t,E,r)$ to be any subset of $B(t,E,r)$ such that 
$$\sum_{i=1}^{t-1}|B(i,E,r)| + |B^\prime(t,E,r)|=n N(m,r)+1.$$
Replacing $B(t,E,r)$ with $B^\prime(t,E,r)$ will lower the run time
of the algorithm.

\begin{algorithm}\label{thealgorithm}
\noindent {\it{Input:}}\\
Received word $\vec{r}=(r_1, \ldots , r_n) \in {\mathbf{F}}_q^n$.\\
Set of integers $(t,E,r)$ that satisfies the initial condition.\\
Corresponding sets $B(1,E,r)\, \ldots ,B(t-1,E,r),B^\prime(t,E,r)$.\\

\noindent {\it{Step 1}}\\
Find non-zero polynomial
$$Q(X_1, \ldots , X_m Z)=Q_0(X_1, \ldots , X_m)+Q_1(X_1, \ldots
,X_m)Z+\cdots +Q_t(X_1, \ldots ,X_m)Z^t$$
such that
\begin{enumerate}
\item ${\mbox{Supp}}(Q_i) \subseteq B(i,E,r)$ for $i=1, \ldots ,t-1$
  and ${\mbox{Supp}}(Q_t) \subseteq B^\prime(t,E,r)$,
\item $(P_i,r_i)$ is a zero of $Q(X_1, \ldots , X_m,Z)$ of
  multiplicity at least $r$ for $i=1, \ldots ,n$.
\end{enumerate}

\noindent {\it{Step 2}}\\
Find all $F(X_1, \ldots ,X_m) \in {\mathbf{F}}_q[X_1, \ldots , X_m]$
such that 
\begin{equation}
(Z-F(X_1, \ldots ,X_m)) |Q(X_1, \ldots , X_m,Z).\label{eqstar}
\end{equation}

\noindent {\it{Output:}}\\
A list containing $(F(P_1), \ldots , F(P_n))$ for all $F$ satisfying~(\ref{eqstar}).
\end{algorithm}

\begin{theorem}
The output of Algorithm~\ref{thealgorithm} contains all words in
$E({\mathbb{M}},S)$ within distance $E$ from the received word
$\vec{r}$. Once the preparation step has been performed the algorithm runs in time ${\mathcal{O}}(\bar{n}^3)$ where
$\bar{n}=n{{m+r}\choose{m+1}}$. For given multiplicity $r$ the maximal
number of correctable errors $E$ and the corresponding sets $B(1,
E,r), \ldots ,B(t-1,E,r)$, $B^\prime(t,E,r)$ can be found in time
${\mathcal{O}}(n \log(n) r^m s' |\overline{\mathbb{M}}|/\sigma)$ 
assuming that the values of the function $D_r$ are known. Here
$\sigma = \max\{i_1+\dotsb+i_m|X_1^{i_1} \dotsb X_m^{i_m} \in
\overline{\mathbb{M}}\}$ and $s' = \max\{s_1, \dotsc, s_m\}$.
\end{theorem}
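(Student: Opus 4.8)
The plan is to establish the three assertions of the theorem separately: correctness of the output, the cubic running time of the decoding proper, and the complexity bound for the preparation step.

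\medskip

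\noindent\textbf{Correctness.}
First I would show that any codeword $\vec{c}={\mbox{ev}}_S(F)$ with ${\mbox{Supp}}(F)\subseteq{\mathbb{M}}$ that lies within Hamming distance $E$ of $\vec{r}$ is recovered in Step~2. The argument is the standard Guruswami-Sudan substitution argument adapted to our setting. Set $g(X_1,\ldots,X_m)=Q(X_1,\ldots,X_m,F(X_1,\ldots,X_m))$. The key is to count zeros of $g$ of multiplicity at least $r$. By condition~(2) of Step~1, each point $P_i$ with $r_i=F(P_i)$ is a zero of $Q$ at $(P_i,r_i)=(P_i,F(P_i))$ of multiplicity at least $r$; a Hasse-derivative chain-rule computation then shows each such $P_i$ is a zero of $g$ of multiplicity at least $r$. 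Since $\vec{c}$ agrees with $\vec{r}$ in at least $n-E$ coordinates, $g$ has at least $n-E$ zeros of multiplicity at least $r$ in $S$. On the other hand I must bound from above the number of such zeros that $g$ can have when $g\neq0$. Writing $Q_i$ with support in $B(i,E,r)$, the leading monomial of each term $Q_i(\vec{X})F(\vec{X})^i$ is of the form $K M^{i}$ with $K\in\Delta(r,m)$ and $M$ the leading monomial of $F$, which lies in $\overline{{\mathbb{M}}}$ by the definition of the border (one shows the worst-case leading monomial of $F$ is in $\overline{{\mathbb{M}}}$ because every element of ${\mathbb{M}}$ divides some element of $\overline{{\mathbb{M}}}$). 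By the defining property of $B(i,E,r)$ we have $D_r(KM^{i})<n-E$, and Theorem~\ref{prorec} then bounds the number of multiplicity-$r$ zeros of $g$ strictly below $n-E$. This forces $g=0$, i.e. $Q(\vec{X},F(\vec{X}))=0$, so $(Z-F)\mid Q$ and $F$ is found in Step~2. The existence of a non-zero $Q$ in Step~1 follows from a dimension count: the number of free coefficients is $\sum_{i=1}^{t-1}|B(i,E,r)|+|B'(t,E,r)|=nN(m,r)+1$, strictly exceeding the $nN(m,r)$ homogeneous linear constraints imposed by condition~(2), using the fact that $N(m,r)={{m+r}\choose{m+1}}$ is exactly the number of Hasse derivatives of order below $r$ in $m+1$ variables.

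\medskip

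\noindent\textbf{Run time of the algorithm.}
Once the sets $B(i,E,r)$ are given, Step~1 amounts to solving a homogeneous linear system over ${\mathbf{F}}_q$ with $\bar n=nN(m,r)=n{{m+r}\choose{m+1}}$ unknowns and at most $\bar n$ equations. Gaussian elimination on such a system costs ${\mathcal{O}}(\bar n^3)$, which dominates the root-finding in Step~2; this gives the claimed ${\mathcal{O}}(\bar n^3)$ bound.

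\medskip

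\noindent\textbf{Cost of the preparation step.}
Finally I would bound the cost of determining the maximal correctable $E$ together with the sets $B(1,E,r),\ldots,B'(t,E,r)$. The plan is to analyze how many evaluations of $D_r$ are needed. For each candidate monomial $K\in\Delta(r,m)$ and each border monomial $M\in\overline{{\mathbb{M}}}$ one tests the inequality $D_r(KM^{i})<n-E$; the factor $|\overline{{\mathbb{M}}}|$ in the bound counts the border monomials, the factor $r^m$ reflects that $\Delta(r,m)$ and the relevant range of exponents is controlled by the constraint $\lfloor i_1/s_1\rfloor+\cdots+\lfloor i_m/s_m\rfloor<r$ together with $s'=\max_j s_j$, and the $n\log(n)$ factor comes from organizing the search over $E$ (a binary search on the threshold $n-E$, each step sorting or sifting the ${\mathcal{O}}(n)$ relevant values). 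Dividing by $\sigma=\max\{i_1+\cdots+i_m\mid X_1^{i_1}\cdots X_m^{i_m}\in\overline{{\mathbb{M}}}\}$ accounts for the fact that raising $M$ to the power $i$ scales exponents so that the number of admissible powers $i$ is inversely proportional to the total degree of $M$. I expect the main obstacle to be the correctness proof, specifically the two reductions to $\overline{{\mathbb{M}}}$: showing that restricting attention to border leading monomials genuinely captures the worst case for every $F$ with ${\mbox{Supp}}(F)\subseteq{\mathbb{M}}$, and verifying that the leading monomial of $Q(\vec{X},F(\vec{X}))$ is indeed bounded by $\max_i KM^i$ rather than being subject to cancellation—this requires care with the lexicographic ordering and the monotonicity of $D_r$, so that the individual term bounds from Theorem~\ref{prorec} combine into a bound strictly below $n-E$.
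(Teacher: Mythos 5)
Your plan follows the same route as the paper's proof: the Guruswami--Sudan substitution argument for correctness, and the count of $\bar{n}$ homogeneous linear equations in $\bar{n}+1$ unknowns both for the existence of a non-zero $Q$ and for the ${\mathcal{O}}(\bar{n}^3)$ cost of Step~1. In fact your sketch is considerably more detailed than the paper's own proof, which compresses the entire zero-counting argument into the single sentence that, ``by the definition of $B(i,E,r)$,'' having $n-E$ zeros of multiplicity at least $r$ forces $Q(X_1,\ldots,X_m,F(X_1,\ldots,X_m))=0$. The two obstacles you flag at the end---the reduction from ${\mathbb{M}}$ to the border $\overline{{\mathbb{M}}}$ via divisibility-monotonicity of $D_r$, and possible cancellation among the leading terms of the $Q_iF^i$ (after cancellation, the leading monomial of $Q(\vec{X},F(\vec{X}))$ lies in the support of some $Q_jF^j$, hence is of the form $K'N$ with $N\in{\mbox{Supp}}(F^j)$ only $\preceq_{\mathrm{lex}} M^j$, not necessarily dividing $M^j$, so Theorem~\ref{prorec} does not apply verbatim)---are genuine subtleties; the paper does not address them either, so flagging them is to your credit, but your proposal, like the paper, leaves them unresolved.

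The one concrete gap relative to the paper concerns Step~2. You assert that root finding is ``dominated'' by the Gaussian elimination, but finding all $F$ with $(Z-F(X_1,\ldots,X_m))\mid Q(X_1,\ldots,X_m,Z)$ is a nontrivial algorithmic task in $({\mathbf{F}}_q[X_1,\ldots,X_m])[Z]$, and the claimed overall run time of ${\mathcal{O}}(\bar{n}^3)$ needs an argument for this step. The paper supplies exactly this by citation: Wu's root-finding algorithm \cite{wu} (see \cite{PellikaanWu}) finds the linear factors in time ${\mathcal{O}}(\bar{n}^3)$; without that (or an equivalent algorithm) your run-time claim is unsupported. Conversely, on the theorem's third assertion---the ${\mathcal{O}}(n\log(n)\,r^m s'\,|\overline{{\mathbb{M}}}|/\sigma)$ cost of the preparation step---the paper's proof is entirely silent, so your heuristic accounting of where each factor comes from, while not rigorous, goes beyond what the paper itself proves; to complete it one would have to specify the actual search procedure (e.g.\ binary search over $E$, and for each pair $(K,M)\in\Delta(r,m)\times\overline{{\mathbb{M}}}$ a search over the admissible powers $i$) and count its operations carefully.
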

\begin{proof} The interpolation problem corresponds to $\bar{n}$
  homogeneous linear equations in $\bar{n}+1$ unknowns. 
Hence, indeed a suitable $Q$ can be found in time
${\mathcal{O}}(\bar{n}^3)$. Now assume ${\mbox{Supp}}(F) \subseteq {\mathbb{M}}$  and that
${\mbox{dist}}_H({\mbox{ev}}_S(F),\vec{r})\leq E$. Then $P_j$ is a zero of $Q(X_1, \ldots , X_m,F(X_1,
\ldots , X_m))$ of multiplicity at least $r$ for at least $n-E$ choices of $j$. By the
definition of $B(i,E,r)$ this can, however, only be the case if $Q(X_1, \ldots ,
X_m,F(X_1, \ldots , X_m))=0$. Therefore, $Z-F(X_1, \ldots ,X_m)$ is a
factor in $Q(X_1, \ldots ,X_m,Z)$. Finding linear factors of
polynomials in $({\mathbf{F}}_q[X_1, \ldots , X_m])[Z]$ can be done in
time ${\mathcal{O}}(\bar{n}^3)$ by applying Wu's algorithm in~\cite{wu} (see \cite[p.\ 21]{PellikaanWu}).\end{proof}
\section{Examples}\label{secexample}
\begin{example}\label{ex2}
In this example we consider a point ensemble $S=S_1 \times S_2$ with
$s_1=128$ and $s_2=64$. We consider codes $E({\mathbb{M}},S)$ where
${\mathbb{M}}=\{ X_1^{i_1}X_2^{i_2} \mid i_1+2i_2 \leq u\}$. Note, the
weight $2$ which corresponds to the number $s_1/s_2$. The performance of
Algorithm~\ref{thealgorithm} is independent of the field in which $S_1$
and $S_2$ live. In Table~\ref{tab:RMweighted128x64} we list the number of errors that we
can correct when the function $D_r(i_1,i_2)$ -- is chosen to be
$D(i_1,i_2,r,80,80)$ (column $D$), -- is the closed formula expression from
Proposition~\ref{protwovar} (column C), -- or is the Schwartz-Zippel bound (column S),
respectively. The row $\lfloor (d-1)/2\rfloor$
corresponds to half the minimum distance and the row Dim.\ is the dimension of
the code.
    \begin{table}
    \caption{Error correction capabilities for the codes in Example~\ref{ex2}}
    \label{tab:RMweighted128x64}
    \centering
    \scalebox{.90}{
    \newcommand{\SP}{~\hspace*{0.10em}}
    \begin{tabular}{r c@{\SP}c@{\SP}c c@{\SP}c@{\SP}c c@{\SP}c@{\SP}c c@{\SP}c@{\SP}c}
        \toprule
        $u$
        &\multicolumn{3}{c}{3}
        &\multicolumn{3}{c}{4}
        &\multicolumn{3}{c}{7}
        &\multicolumn{3}{c}{20}\\
        \cmidrule(lr){2-4}
        \cmidrule(lr){5-7}
        \cmidrule(lr){8-10}
        \cmidrule(lr){11-13}
        $r$&$D$&C&S &$D$&C&S &$D$&C&S &$D$&C&S \\
        \midrule
         2&5129&5105&4895&4799&4777&4575&4143&4124&3871&2487&2475&2175\\
         3&5367&5333&5205&5048&5016&4906&4407&4381&4245&2855&2833&2666\\
         4&5474&5438&5343&5180&5143&5071&4566&4535&4431&3060&3031&2927\\
         9&    &5653&5617&    &5390&5361&    &4817&4785&    &3415&3384\\
        20&    &5757&5740&    &5509&5494&    &4959&4943&    &3609&3599\\
        \midrule
        $\lfloor\frac{d-1}{2}\rfloor$
        &\multicolumn{3}{c}{3999}
        &\multicolumn{3}{c}{3967}
        &\multicolumn{3}{c}{3871}
        &\multicolumn{3}{c}{3455}\\
        Dim.
        &\multicolumn{3}{c}{6}
        &\multicolumn{3}{c}{9}
        &\multicolumn{3}{c}{20}
        &\multicolumn{3}{c}{121}\\
        \bottomrule
    \end{tabular}
    }
    \end{table}
\end{example}
\begin{example}\label{ex3}
This is a continuation of Example~\ref{ex2}. We consider the same
point ensemble but choose ${\mathbb{M}}=\{X_1^{i_1}X_2^{i_2} \mid i_1
< k_1, i_2 < k_2\}$. That is, we consider Reed-Solomon
product codes. In Table~\ref{tab:PCweighted128x64} we list the number of errors that can
be corrected by Algorithm~\ref{thealgorithm}.
    \begin{table}
    \caption{Error correction capabilities for the codes in Example~\ref{ex3}.}
    \label{tab:PCweighted128x64}
    \centering
    \scalebox{.90}{
    \newcommand{\SP}{~\hspace*{0.10em}}
    \begin{tabular}{r c@{\SP}c@{\SP}c c@{\SP}c@{\SP}c c@{\SP}c@{\SP}c c@{\SP}c@{\SP}c}
        \toprule
        $(k_1,k_2)$
        &\multicolumn{3}{c}{$(4,7)$}
        &\multicolumn{3}{c}{$(5,9)$}
        &\multicolumn{3}{c}{$(8,15)$}
        &\multicolumn{3}{c}{$(21,41)$}\\
        \cmidrule(lr){2-4}
        \cmidrule(lr){5-7}
        \cmidrule(lr){8-10}
        \cmidrule(lr){11-13}
        $r$&$D$&C&S &$D$&C&S &$D$&C&S &$D$&C&S \\
        \midrule
         2&4036&4015&3519&3655&3639&3071&2820&2808&2111&1061&1055&0   \\
         3&4289&4261&3903&3911&3885&3498&3077&3058&2602&1183&1171&533 \\
         4&4411&4381&4111&4042&4011&3727&3214&3187&2847&1310&1291&831 \\
         9&    &4598&4487&    &4244&4124&    &3455&3313&    &1567&1365\\
        20&    &4711&4662&    &4364&4310&    &3588&3526&    &1704&1615\\
        \midrule
        $\lfloor\frac{d-1}{2}\rfloor$
        &\multicolumn{3}{c}{3720}
        &\multicolumn{3}{c}{3599}
        &\multicolumn{3}{c}{3248}
        &\multicolumn{3}{c}{1935}\\
        Dim.
        &\multicolumn{3}{c}{28}
        &\multicolumn{3}{c}{45}
        &\multicolumn{3}{c}{120}
        &\multicolumn{3}{c}{861}\\
        \bottomrule
    \end{tabular}
    }
    \end{table}
\end{example}
\begin{example}\label{ex1}
In this example we consider a point ensemble $S=S_1 \times S_2$ with
$s_1=s_2=80$. We consider codes $E({\mathbb{M}},S)$ where
${\mathbb{M}}=\{X_1^{i_1}X_2^{i_2} \mid i_1+i_2 \leq u\}$ for various
values of $u$. That is, we consider Reed-Muller like codes. The performance of Algorithm~\ref{thealgorithm} is
independent on the field in which $S_1$ and $S_2$ live. In
Table~\ref{tab:RMdiagonal80x80} we list the number of errors that we
can correct when the function $D_r(i_1,i_2)$ -- is chosen to be
$D(i_1,i_2,r,80,80)$ (column $D$), -- is the closed formula expression from
Proposition~\ref{protwovar} (column C), -- or is the Schwartz-Zippel bound (column S),
respectively. Column A corresponds to a bound
from~\cite{augotstepanov} on what can be achieved by applying their 
algorithm. The row ${\mbox{A}}_\infty$ corresponds to what could
theoretically be achieved by the algorithm in~\cite{augotstepanov} if
one uses high enough multiplicity. \\
Assuming $S_1,S_2
\subseteq {\mathbf{F}}_{128}$, $E({\mathbb{M}},S)$ corresponds to a
puncturing of the generalized Reed-Muller code
${\mbox{RM}}_{128}(u,2)$. This suggest that as an alternative to using
Algorithm~\ref{thealgorithm} one could decode with respect to
${\mbox{RM}}_{128}(u,2)$ treating the punctured points as errors. The
best known algorithm to decode ${\mbox{RM}}_{128}(u,2)$ for the values of
$u$ considered in this example is the algorithm by
Pellikaan and Wu which uses the subfield subcode approach. Row 
PW of the table explains what can be achieved by this
alternative approach. 
    \begin{table}
    \caption{Error correction capabilities for the codes in Example~\ref{ex1}}
    \label{tab:RMdiagonal80x80}
    \centering
    \scalebox{.75}{
    \newcommand{\SP}{~\hspace*{0.10em}}
    \begin{tabular}{r c@{\SP}c@{\SP}c@{\SP}c c@{\SP}c@{\SP}c@{\SP}c c@{\SP}c@{\SP}c@{\SP}c c@{\SP}c@{\SP}c@{\SP}c}
        \toprule
        $u$
        &\multicolumn{4}{c}{3}
        &\multicolumn{4}{c}{4}
        &\multicolumn{4}{c}{7}
        &\multicolumn{4}{c}{20}\\
        \cmidrule(lr){2-5}
        \cmidrule(lr){6-9}
        \cmidrule(lr){10-13}
        \cmidrule(lr){14-17}
        $r$&$D$&C&S&A &$D$&C&S&A &$D$&C&S&A &$D$&C&S&A \\
        \midrule
         2&3594&3571&3399&3310&3317&3297&3119&2999&2693&2679&2479&2302&1279&1279& 999& 585\\
         3&3791&3765&3679&3604&3524&3499&3413&3323&2943&2918&2799&2692&1575&1559&1439&1138\\
         4&3899&3869&3799&3758&3647&3618&3559&3492&3080&3058&2979&2896&    &1728&1639&1428\\
         9&    &4072&4053&4027&    &3837&3813&3792&    &3315&3297&3253&    &2053&2035&1935\\
        20&    &4171&4163&4152&    &3946&3939&3926&    &3444&3435&3418&    &2219&2211&2169\\

        \midrule
        A$_\infty$
        &\multicolumn{4}{c}{4257}
        &\multicolumn{4}{c}{4042}
        &\multicolumn{4}{c}{3558}
        &\multicolumn{4}{c}{2368}\\
        PW
        &\multicolumn{4}{c}{3891}
        &\multicolumn{4}{c}{3503}
        &\multicolumn{4}{c}{2568}
        &\multicolumn{4}{c}{0}\\ 
        $\lfloor\frac{d-1}{2}\rfloor$
        &\multicolumn{4}{c}{3079}
        &\multicolumn{4}{c}{3039}
        &\multicolumn{4}{c}{2919}
        &\multicolumn{4}{c}{2399}\\
        Dim.
        &\multicolumn{4}{c}{10}
        &\multicolumn{4}{c}{15}
        &\multicolumn{4}{c}{36}
        &\multicolumn{4}{c}{231}\\
        \bottomrule
    \end{tabular}
    }
    \end{table}
\end{example}
\section{Acknowledgments}\label{}
This work was supported in part by Danish Natural Research Council grant 272-07-0266. The authors would like to
thank Peter Beelen and Teo Mora for pleasant discussions. Also thanks
to L.\ Grubbe Nielsen for linguistical assistance.

\end{document}